\def\expandafter\UrlBreaks\expandafter{\UrlBreaks
	\do\a\do\b\do\c\do\d\do\e\do\f\do\g\do\h\do\i\do\j%
	\do\k\do\l\do\m\do\n\do\o\do\p\do\q\do\r\do\s\do\t%
	\do\u\do\v\do\w\do\x\do\y\do\z\do\A\do\B\do\C\do\D%
	\do\E\do\F\do\G\do\H\do\I\do\J\do\K\do\L\do\M\do\N%
	\do\O\do\P\do\Q\do\R\do\S\do\T\do\U\do\V\do\W\do\X%
	\do\Y\do\Z\do\*\do\-\do\~\do\'\do\"\do\-}%
\newtheorem{theorem}{Theorem}
\newtheorem{remark}[theorem]{Remark}
\newtheorem{definition}[theorem]{Definition}
\newtheorem{corollary}[theorem]{Corollary}
\DeclareMathOperator{\atanh}{atanh}
\begin{document}

\author{Farhad Farokhi\thanks{F. Farokhi is with the CSIRO's Data61 and the University of Melbourne. e-mails: farhad.farokhi@data61.csiro.au; farhad.farokhi@unimelb.edu.au}\thanks{The work of F. Farokhi was funded, in part, by the Australian Government Department of the Environment and Energy through National Energy Analytics Research (NEAR) Program.}}	

\title{Temporally Discounted Differential Privacy for Evolving Datasets on an Infinite Horizon}

\maketitle

\begin{abstract}
	We define \textit{discounted differential privacy}, as an alternative to (conventional) differential privacy, to investigate privacy of \textit{evolving datasets, containing time series} over an unbounded horizon. We use \textit{privacy loss} as a measure of the amount of information leaked by the reports at a certain fixed time. We observe that privacy losses are weighted equally across time in the definition of differential privacy, and therefore the magnitude of privacy-preserving additive noise must grow without bound to ensure differential privacy over an infinite horizon. Motivated by the discounted utility theory within the economics literature, we use \textit{exponential and hyperbolic discounting of privacy losses} across time to relax the definition of differential privacy under continual observations. This implies that privacy losses in distant past are less important than the current ones to an individual. We use discounted differential privacy to investigate privacy of evolving datasets using additive Laplace  noise and show that the magnitude of the additive noise can remain bounded under discounted differential privacy. We illustrate the quality of privacy-preserving mechanisms satisfying discounted differential privacy on smart-meter measurement time-series of real households, made publicly available by Ausgrid (an Australian electricity distribution company).
\end{abstract}

\begin{IEEEkeywords}
	privacy; differential privacy; evolving datasets; temporal discounting.
\end{IEEEkeywords}

\section{Introduction}
Real-time analytics of customer data can benefit decision making of businesses in sectors, such as energy (e.g., real-time smart-meter measurements for demand and load forecasting), intelligent transportation (e.g., real-time traffic estimation by monitoring of movements of individuals), and retail industry (e.g., real-time analysis of customer interactions and purchases with online retail services to maximize profits). Privacy concerns, however, may restrict the availability of customer data or its use in decision making. For instance, smart-meter time-series can leak private information about household occupancy, entertainment habits, and air conditioning decisions~\cite{mcdaniel2009security, greveler2012multimedia,hosseini2017non}. The extent of privacy concerns have sometimes proved to  even hinder the roll out of smart meters~\cite{cuijpers2013smart}.

Differential privacy~\cite{dwork2014algorithmic,dwork2006calibrating, Dwork2006DP20972822097284} is a natural candidate to alleviate privacy concerns in general. However, differential privacy literature most often deals with providing privacy-preserving responses to queries based on large, yet static datasets that are kept securely by a data curator while, in real-time analytics, the underlying data in possession of the curator changes over time. The composition rule of differential privacy (see, e.g.,~\cite{dwork2014algorithmic}) implies that the magnitude of the additive noise that ensures differential privacy must grow rapidly, or that the privacy budget of each response must decrease correspondingly, to ensure that the entire privacy budget remains bounded. 

Recently, better performance bounds have been derived for differentially-private  responses to queries  on evolving datasets~\cite{joseph2018local,dwork2010differential, chan2011private,perrier2019private,cummings2018differential}. These studies however consider certain sets of queries, such as counting queries~\cite{dwork2010differential}. In these studies, the magnitude of the additive noise still remains unbounded; the best bound is $\mathcal{O}(\log(t)^{1.5})$ with $t$ denoting the number of observations, scaling with time. In fact, having a finite magnitude for the additive noise might not be even possible for some queries~\cite{vadhan2017complexity}.

Interesting approaches to differential privacy have stemmed from the study of dynamical systems in, e.g.,~\cite{le2012differentially, le2013differentially, cortes2016differential, wang2017differential, han2018privacy,le2017differentially}. However, they differ from the context of this paper in multiple directions. For instance, in~\cite{le2012differentially, le2013differentially}, the notion of neighbouring datasets requires that at most one component signal to differ and  this deviation must be bounded (in norm or count) while in this paper the deviation does not need to be bounded as the horizon over which we consider the problem can tend to infinity\footnote{Note that, here, we assume that each deviation is bounded but the norm of the signal can be unbounded as we do not consider finite horizon deviations.}. In~\cite{wang2017differential, han2016differentially}, finite horizon frameworks are considered. Although potentially infinite horizon algorithms were considered  in~\cite{nozari2015differentially,mo2016privacy, nozari2017differentially}, only privacy of the initial condition of the algorithm is investigated and thus the notion of neighbouring datasets only requires  single deviations. 

Motivated by these observations, in this paper, we relax the definition of the differential privacy by discounting privacy losses in distant past to be able to ensure privacy of evolving datasets over a possibly infinite\footnote{Note that, in practice, an infinite horizon does not exists. However, studying infinite horizons helps us to understand cases in which an upper bound on the horizon is not known. For instance, in the smart-metering example, we may not know, in advance, the duration for which a dataset of measurements is curated and made available for real-time analytic. } horizon. We use the notion of \textit{privacy loss} from~\cite{dwork2014algorithmic} to break down the definition of differential privacy across time. Privacy loss can be seen as a measure of the amount of information leaked by the reports at a certain fixed time. To ensure differential privacy, the summation of all privacy losses across time must be bounded by the privacy budget. We show that, because in the definition of differential privacy, the privacy losses are weighted equally across time, the magnitude of the privacy-preserving additive noise must grow unbounded to ensure differential privacy over an infinite horizon. Therefore, the reports become meaningless after a while. This motivates discounting privacy losses across time to generalize, or better-said relax, the definition of differential privacy. 

Discounting losses or gains across time is common place in the economics literature~\cite{myerson1995discounting,berns2007intertemporal}. A common practice is to exponentially discount losses or gains across time, i.e., scaling them by $\alpha^k$, where $\alpha$ is the discount factor and $k$ is the delay (time to or since the observation of the loss or gain). This discounting regime dates back to the early 20th century and is motivated by interest/cash rates~\cite{ramsey1928mathematical, 1023072967612}. More recently, it has been shown that humans and animals follow a hyperbolic discounting regime~\cite{myerson1995discounting, ainslie1975specious, ainslie1974impulse, Ainslie1981,kirby1997bidding}. In hyperbolic discounting, losses or gains are scaled by $1/(1+\beta k)$, where $\beta$ is the discounting coefficient and $k$ is again the delay. Hyperbolic discounting has been found to relate to real-world examples of self-control~\cite{vuchinich1998hyperbolic}, which makes it more interesting within the context of privacy preservation as it has been observed that privacy and self-control are heavily related in personal decision making~\cite{derlega1977privacy}.

The idea that privacy loss in a distant past is less important to an individual, and is thus discounted exponentially or hyperbolically, could be motivated by that people change their habits and addresses across time. For instance, according to the  \cite{abshouse} 2007-08 Survey of Income and Housing, 43\% of people in Australia have moved house within the last five years. Therefore, private information regarding their previous location can be deemed less sensitive. Temporal discounting is conjectured to be one of the reasons behind why few individuals take no action to protect their personal information, even when doing so involves limited costs~\cite{acquisti2004privacy}. Note that privacy behaviour (what people do) and privacy attitudes (what people think) should not be mistaken with each other as most individuals express that they are concerned about their  information privacy and are willing to act to protect it~\cite{acquisti2004privacy, rosen2011right}, yet in experiments they do not~\cite{Chellappa2005, spiekermann2001privacy}. Discounting privacy is more compatible with privacy behaviour in comparison to privacy attitudes. Finally, note that the discounting factor or coefficient can be chosen based on the preferences of an individual such that its effect is negligible over a long enough, yet finite, horizon, e.g., a person's life expectancy or active life span, albeit at the risk of reducing the quality of reports. This points to bigger issue of privacy and utility trade-off.

Discounting or decaying privacy was first studied in~\cite{bolot2013private}. However, the discounting in that paper was not motivated from a human's perception of loss in the economics theory. Therefore, in that paper, only exponential discounting was considered. As stated earlier, exponential discounting does not fully capture a human's response to loss across time. In this paper, in addition to exponential discounting, we consider hyperbolic discounting of privacy losses that is a better match for a human's perception of loss in distant past. An alternative to discounting or decaying privacy losses is to only consider a recent window, which was studied in~\cite{kellaris2014differentially,chen2017pegasus}. These approaches are of interest if we know the underlying population that generates the evolving dataset changes over time in regular intervals (e.g., WiFi users in a pubic domain over twenty-four hours window of time) and do not match the perception of a single individual from privacy loss over an entire horizon, which can be better matched by continuous discounting rather than windowing. 

In summary, in this paper, we make the following contributions:
\begin{itemize}
	\item We use exponential and hyperbolic discounting of privacy losses across time to relax the definition of differential privacy for use with evolving datasets;
	\item We use discounted differential privacy to investigate privacy of evolving datasets using an additive Laplace  noise and show that exponentially discounted differential privacy can be achieved with bounded magnitude of additive noise in contrast with differential privacy;
	\item We demonstrate the applicability of this paper's privacy-preserving mechanisms, which meet the standards of discounted differential privacy, on smart-meter measurement time series of private households, available from~\cite{ausgrid}.
\end{itemize}


The rest of the paper is organized as follows. In Section~\ref{sec:privacy_loss}, we define privacy loss and illustrate its relationship with differential privacy. We relax the notion of differential privacy to take into account temporal discounting of privacy losses in Section~\ref{sec:discounted}. We illustrate the results of the paper on smart-meter measurements of households in Section~\ref{sec:experiment}. We conclude the paper and present avenues for future research in Section~\ref{sec:conclusions}. 

\section{Differential Privacy and Privacy Loss} \label{sec:privacy_loss}
In this section, we introduce differential privacy for evolving datasets. We breakdown the definition of differential privacy across time by using privacy loss as a measure of the amount of information leaked by the reports at a certain fixed time. 
\subsection{Differential Privacy}
Consider an evolving, longitudinal dataset, containing time series, of the form 
\begin{align} \label{eqn:dataset}
X(t)\hspace{-.03in}:=\hspace{-.03in}
\begin{bmatrix}
x_1(1) & x_1(2) & \cdots & x_1(t)\\
x_2(1) & x_2(2) & \cdots & x_2(t)\\
\vdots & \vdots & \ddots & \vdots\\
x_n(1) & x_n(2) & \cdots & x_n(t)
\end{bmatrix} \hspace{-.03in}\in\hspace{-.03in}\mathcal{X}^{n\times t}\subseteq\mathbb{R}^{n\times t},
\end{align}
where $x_i(t)\in\mathcal{X}\subseteq\mathbb{R}$ denotes the entry for individual $i\in\{1,\dots,n\}$ at time instant $t\in\mathbb{N}$.  In this paper, we assume that $t$ is not bounded from above and can potentially approach infinity, i.e., it can grow unbounded.  Note that, for any $1\leq k\leq t$, $X(k)$ denotes a matrix extracted by eliminating  the last $t-k$ columns of $X(t)$. An example of such a longitudinal dataset is a dataset containing regular smart meter reading of $n$ fixed households. 

\begin{remark}[Addition/Removal of Individuals] In the dataset model in~\eqref{eqn:dataset}, we can consider addition and removal of individuals to the dataset across time. In this case, for all time instants in which a measurement for an individual is not available because it has not yet arrived or has left the dataset, we can use a special characters, such as  $\emptyset$. Hence, we must have $\mathcal{X}=\mathbb{R}\cup\{\emptyset\}$.
\end{remark}

We assume that, as the dataset evolves, the custodian reports 
\begin{align}
Y(t)
=
\begin{bmatrix}
y(1) & y(2) & \cdots & y(t)
\end{bmatrix}
\in\mathcal{Y}^{t}\subseteq\mathbb{R}^t.
\end{align}
Similarly, for any $1\leq k\leq t$, $Y(k)$ denotes a row vector extracted by eliminating the last $t-k$ entries of $Y(t)$. At time instance $t$, to generate the entry $y(t)$, the custodian uses conditional probability density function $p_{y|X}(\cdot|\cdot)$. From now on, we refer to this as the mechanism of the curator.  The mechanisms are causal by construct, that is, at any time instant $t$, the report $Y(t)$ can only be a function of the entries of the longitudinal dataset up to time $t$, $X(t)$.  In what follows, when it is evident from the context, we use $p(\cdot|\cdot)$ instead of $p_{y|X}(\cdot|\cdot)$. 
In this paper, we are interested in the differential privacy as a notion of privacy.

\begin{definition}[Neighbouring Datasets] Two datasets $X(t)$ and $X'(t)$ are neighbouring datasets, shown by $X(t)\sim X'(t)$, if they differ from each other in at most one row.
\end{definition}

\begin{definition}[Differential Privacy] A reporting mechanism is $\epsilon$-differentially private for $\epsilon>0$ if for any pair of neighbouring datasets $X(t),X'(t)$ and any output $Y(t)$,
\begin{align}
p(Y(t)|X(t))\leq \exp(\epsilon)p(Y(t)|X'(t)).
\end{align}
\end{definition}
Now, we are ready to introduce the notion of privacy loss from~\cite{dwork2014algorithmic} by de-constructing the ratio of probability density functions for use within the definition of differential privacy.

\subsection{From Differential Privacy to Privacy Loss}
In this subsection, we dig deeper in the notion of differential privacy to define privacy loss. 
Assuming conditional independence of $Y(k)$ and $Y(k-1)$ given $X(t)$ for all $2\leq k\leq t$, we get
\begin{align*}
p(Y(t)|X(t))
=&p(y(t)|X(t),Y(t-1)) p(Y(t-1)|X(t))\\
=&p(y(t)|X(t))p(Y(t-1)|X(t))\\
=& p(y(t)|X(t))p(y(t-1)|X(t))\\& \times p(Y(t-2)|X(t)),
\end{align*}
where the equalities follow from the definition of conditional probability density function. Following this line of reasoning, we get
\begin{align*}
p(Y(t)|X(t))
&=\prod_{k=1}^t p(y(k)|X(t))\\
&=\prod_{k=1}^t p(y(k)|X(k)),
\end{align*}
where the last equality follows from the causality of the reports. This results in 
\begin{align*}
\frac{p(Y(t)|X(t))}{p(Y(t)|X'(t))}
=\prod_{k=1}^t \frac{p(y(k)|X(k))}{p(y(k)|X'(k))}.
\end{align*}
These derivations motivate the use of $p(y(k)|X(k))/p(y(k)|X'(k))$, or in fact its logarithm, as a measure of privacy loss because if the ratio $p(y(k)|X(k))/p(y(k)|X'(k))$ is large, $y(k)$ leaks more information in terms of increasing the required differential-privacy budget for time instant $k$. 

\begin{definition}[Privacy Loss] Privacy loss due to entry $y(k)$ is 
\begin{align*}
\rho(k)=\sup_{y(k)}\sup_{X(k),X'(k):X(k)\sim X'(k)}\log\bigg(\frac{p(y(k)|X(k))}{p(y(k)|X'(k))}\bigg).
\end{align*}
\end{definition}

We can relate the notion of differential privacy and privacy loss together. This is explored in the next theorem.

\begin{theorem}[Privacy Loss and Differential Privacy] \label{tho:1} A reporting mechanism is $\epsilon$-differentially private for $\epsilon>0$ if
\begin{align}
\sum_{k=1}^t \rho(k)\leq \epsilon.
\end{align}
\end{theorem}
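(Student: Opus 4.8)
The plan is to chain together the factorization of the joint conditional density that was already established in the text with the definition of privacy loss, and then bound the resulting product termwise. First I would recall the identity
\begin{align*}
\frac{p(Y(t)|X(t))}{p(Y(t)|X'(t))}
=\prod_{k=1}^t \frac{p(y(k)|X(k))}{p(y(k)|X'(k))},
\end{align*}
which holds for any pair of datasets under the stated conditional-independence and causality assumptions. Taking logarithms converts this into a sum, so that
\begin{align*}
\log\frac{p(Y(t)|X(t))}{p(Y(t)|X'(t))}
=\sum_{k=1}^t \log\frac{p(y(k)|X(k))}{p(y(k)|X'(k))}.
\end{align*}

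Next I would observe that for neighbouring datasets $X(t)\sim X'(t)$, each truncation $X(k)\sim X'(k)$ is also neighbouring (they still differ in at most one row), so each summand is bounded above by the supremum appearing in the definition of $\rho(k)$; that is, $\log(p(y(k)|X(k))/p(y(k)|X'(k)))\le \rho(k)$ for every admissible $y(k)$ and every neighbouring pair. Summing these bounds gives $\log(p(Y(t)|X(t))/p(Y(t)|X'(t)))\le \sum_{k=1}^t \rho(k)\le \epsilon$ by hypothesis, and exponentiating yields $p(Y(t)|X(t))\le\exp(\epsilon)p(Y(t)|X'(t))$, which is exactly the differential privacy inequality. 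Since the pair of neighbouring datasets and the output $Y(t)$ were arbitrary, the mechanism is $\epsilon$-differentially private.

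The only genuinely delicate point is the interchange of the per-time suprema: $\rho(k)$ is defined as a supremum over $y(k)$ and over neighbouring pairs $X(k),X'(k)$ taken independently for each $k$, whereas in the product we have a single fixed trajectory $Y(t)$ and a single fixed neighbouring pair $X(t),X'(t)$ whose truncations are correlated across $k$. The argument works because we are upper-bounding: for the fixed data the $k$-th factor's log is no larger than the free supremum $\rho(k)$, so replacing each factor by its own worst case only increases the bound, and no joint optimization is needed. I would also note the mild technical caveat that $\rho(k)$ could in principle be $+\infty$ for some mechanisms, in which case the hypothesis $\sum_k\rho(k)\le\epsilon$ already fails, so the statement is vacuously consistent; for well-behaved mechanisms (e.g.\ additive Laplace noise, as used later in the paper) each $\rho(k)$ is finite and the bound is meaningful. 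This caveat aside, the proof is a short and essentially routine combination of the factorization identity, monotonicity of $\log$ and $\exp$, and termwise domination.
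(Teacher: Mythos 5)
Your proposal is correct and follows essentially the same route as the paper's proof: factorize the likelihood ratio across time, bound each factor by $\exp(\rho(k))$, and invoke the hypothesis together with the definition of differential privacy. Your added remarks (that truncations of neighbouring datasets remain neighbouring, and that termwise domination by the per-time suprema is all that is needed) simply make explicit what the paper leaves implicit.
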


\begin{proof} We have
\begin{align*}
\frac{p(Y(t)|X(t))}{p(Y(t)|X'(t))}
&= \prod_{k=1}^t \frac{p(y(k)|X(k))}{p(y(k)|X'(k))}\\
&\leq \prod_{k=1}^t \exp(\rho(k))\\
&=\exp\left( \sum_{k=1}^t \rho(k)\right).
\end{align*}
The rest of the proof follows from the definition of differential privacy. 
\end{proof}

Theorem~\ref{tho:1} states that if the summation of all privacy losses is bounded from above by the total privacy budget $\epsilon$, $\epsilon$-differential privacy can be established. 

Now, consider the case where the curator is given a family of queries $f_t:\mathcal{X}^{n\times t}\rightarrow\mathbb{R}$, $\forall t\in\mathbb{N}$, to compute on the evolving dataset. In return, the curator provides noisy reports of the form:
\begin{align} \label{eqn:2}
y(t)=f_t(X(t))+w(t),
\end{align}
where $(w(t))_{t\in\mathbb{N}}$ is a sequence of i.i.d.\footnote{i.i.d. stands for independently and identically distributed. } Laplace random variables with zero mean and scale $b_t>0$.

\begin{theorem} \label{tho:2} The reporting mechanism~\eqref{eqn:2} is $\epsilon$-differentially private for $\epsilon>0$ if
\begin{align}
\sum_{k=1}^t\frac{\Delta f_k}{b_k}\leq \epsilon,
\end{align}
where $\Delta f_k$ is the sensitivity of the query defined as
\begin{align}
\Delta f_k:=\sup_{X(k),X'(k):X(k)\sim X'(k)} |f_k(X'(k))-f_k(X(k))|.
\end{align}
\end{theorem}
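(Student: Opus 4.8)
The plan is to apply Theorem~\ref{tho:1}, so the whole task reduces to bounding the per-time-step privacy loss $\rho(k)$ for the Laplace mechanism in~\eqref{eqn:2} by $\Delta f_k / b_k$. First I would fix a time instant $k$, an output value $y(k)$, and a pair of neighbouring datasets $X(k)\sim X'(k)$. Since $y(k)=f_k(X(k))+w(k)$ with $w(k)$ a zero-mean Laplace variable of scale $b_k$, the conditional density is $p(y(k)|X(k)) = \frac{1}{2b_k}\exp(-|y(k)-f_k(X(k))|/b_k)$, and similarly for $X'(k)$. Taking the ratio, the normalizing constants cancel and I get $\log(p(y(k)|X(k))/p(y(k)|X'(k))) = \frac{1}{b_k}\big(|y(k)-f_k(X'(k))| - |y(k)-f_k(X(k))|\big)$.

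Next I would invoke the reverse triangle inequality: $|y(k)-f_k(X'(k))| - |y(k)-f_k(X(k))| \leq |f_k(X(k)) - f_k(X'(k))| \leq \Delta f_k$, where the last step is just the definition of sensitivity $\Delta f_k$ together with $X(k)\sim X'(k)$. Hence the log-ratio is at most $\Delta f_k/b_k$ for every choice of $y(k)$ and every neighbouring pair, so taking the suprema in the definition of privacy loss gives $\rho(k)\leq \Delta f_k/b_k$. Summing over $k=1,\dots,t$ yields $\sum_{k=1}^t \rho(k) \leq \sum_{k=1}^t \Delta f_k/b_k \leq \epsilon$ by hypothesis, and then Theorem~\ref{tho:1} delivers $\epsilon$-differential privacy.

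I do not anticipate a genuine obstacle here; this is the standard Laplace-mechanism calculation adapted to the per-timestep decomposition already established in the excerpt. The one point worth being slightly careful about is the conditional-independence assumption used to derive the product form of $p(Y(t)|X(t))$ preceding Theorem~\ref{tho:1} — it must hold for the mechanism~\eqref{eqn:2}. This is automatic because the noise variables $w(t)$ are i.i.d.\ across time and $y(k)$ depends only on $X(k)$ and $w(k)$, so conditioned on $X(t)$ the outputs $y(1),\dots,y(t)$ are mutually independent; thus the causality and independence hypotheses of Theorem~\ref{tho:1} are satisfied and the reduction is legitimate. A remark to that effect may be worth including, but otherwise the proof is a short three-line computation plus an appeal to Theorem~\ref{tho:1}.
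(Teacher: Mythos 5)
Your proof is correct and follows essentially the same route as the paper: compute the ratio of the Laplace densities, apply the reverse triangle inequality to get $\rho(k)\leq \Delta f_k/b_k$, sum over $k$, and invoke Theorem~\ref{tho:1}. Your extra remark verifying the conditional independence of the outputs given $X(t)$ for the i.i.d.\ Laplace mechanism is a sound (and slightly more careful) justification of a step the paper leaves implicit, and you also correctly state the bound on $\sum_{k=1}^t\rho(k)$ as an inequality where the paper loosely writes an equality.
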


\begin{proof} Note that
\begin{align*}
\exp(\rho(k))
&=\sup_{y(k)}\sup_{X(k)\sim X'(k)}\frac{\exp(-|y(k)-f_k(X(k))|/b_k)}{\exp(-|y(k)-f_k(X'(k))|/b_k)}\\
&=\sup_{y(k)}\sup_{X(k)\sim X'(k)}\exp\left(\frac{1}{b_k}|y(k)-f_k(X'(k))|\right.\\
&\hspace{1.3in}\left. -\frac{1}{b_k}|y(k)-f_k(X(k))|\right)\\
&\leq\sup_{X(k)\sim X'(k)}\exp\left(\frac{1}{b_k}|f_k(X'(k))-f_k(X(k))|\right)\\
&\leq \exp\left(\frac{\Delta f_k}{b_k}\right),
\end{align*}
where the third equality follows from that $|a|-|b|\leq |a-b|$ because $|a|=|b-(b-a)|\leq |b|+|a-b|$. Hence,
$\sum_{k=1}^t \rho(k)
=\sum_{k=1}^t \Delta f_k/b_k.$
The rest follows from the application of Theorem~\ref{tho:1}.
\end{proof}

Theorem~\ref{tho:2} implies, with a reporting mechanism in the form of~\eqref{eqn:2}, we might not be able to ensure $\epsilon$-differential privacy over an unbounded horizon unless the magnitude of the additive noise grows unbounded or the queries become gradually less intrusive by decreasing $\Delta f_k$ rapidly enough.  This is because, if $b_t$ is kept constant and $\Delta f_k$ does not decrease, $\sum_{k=1}^t \Delta f_k/b_k=+\infty$, which makes the satisfaction of the condition of Theorem~\ref{tho:2} impossible. However, such a result might not be necessary as Theorem~\ref{tho:2} is only sufficient. Nonetheless this seems to be line with the differential privacy literature~\cite{vadhan2017complexity,dwork2010differential}.

\begin{corollary}
Assume that $\Delta f_k=\Delta f$ for all $k\in\mathbb{N}$. Then $\sum_{k=1}^t \Delta f_k/b_k\leq \epsilon$ for $\epsilon>0$ only if $\lim_{k\rightarrow}b_k=+\infty$. 
\end{corollary}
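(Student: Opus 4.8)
The plan is to reduce the claim to the elementary fact that the general term of a convergent series of positive reals must vanish. First I would make explicit that the hypothesis ``$\sum_{k=1}^t \Delta f_k/b_k\le\epsilon$ for $\epsilon>0$'' is to be read as holding for \emph{every} horizon $t\in\mathbb{N}$, since that is precisely the condition that appears in Theorem~\ref{tho:2}. Under the standing assumption $\Delta f_k=\Delta f$, and recalling that $b_k>0$ by construction and that $\Delta f>0$ (a query with zero sensitivity is degenerate and imposes no constraint), every term $\Delta f/b_k$ is strictly positive, so the partial sums $S_t:=\sum_{k=1}^t \Delta f/b_k$ form a nondecreasing sequence.

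Next I would observe that this nondecreasing sequence is bounded above by $\epsilon$, hence converges to a finite limit $S_\infty\le\epsilon$; equivalently, the series $\sum_{k=1}^\infty \Delta f/b_k$ converges. By the divergence test (a necessary condition for convergence of a series), its general term tends to zero, i.e. $\Delta f/b_k\to 0$ as $k\to\infty$. Dividing by the fixed positive constant $\Delta f$ yields $1/b_k\to 0$, and since $b_k>0$ this is equivalent to $b_k\to+\infty$, which is the desired conclusion. The contrapositive gives the same thing more directly: if $\limsup_k b_k<\infty$, then along some subsequence $\Delta f/b_k$ is bounded below by a positive constant, forcing $S_t\to+\infty$ and violating the uniform bound.

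I do not anticipate a genuine obstacle here, since the content is essentially a one-line consequence of the $n$-th term test for series. The only points deserving a word of care are (i) stating clearly that the inequality is assumed uniformly in $t$ rather than for a single $t$, and (ii) the harmless non-degeneracy assumption $\Delta f>0$, without which $b_k$ is entirely unconstrained.
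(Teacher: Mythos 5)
Your proof is correct and is essentially the same argument as the paper's: the paper proves the contrapositive by extracting a bounded subsequence $b_{k_\ell}\leq B$ and noting that $\sum_{\ell}\Delta f/b_{k_\ell}$ diverges, which is exactly the contrapositive form you sketch at the end, while your primary phrasing via bounded monotone partial sums and the $n$-th term test is the same elementary fact stated directly. Your explicit remarks that the inequality must hold for every horizon $t$ and that $\Delta f>0$ is needed are correct clarifications of hypotheses the paper leaves implicit.
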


\begin{proof} Assume $\lim_{k\rightarrow}b_k=+\infty$ does not hold. Therefore, there exists a subsequence $(k_\ell)_{\ell\in\mathbb{N}}$ such that $k_\ell$ are increasing and $b_{k_\ell}\leq B$. Therefore, $\sum_{k=1}^\infty \Delta f_k/b_k\geq \sum_{\ell=1}^\infty \Delta f/b_{k_\ell}\geq \sum_{\ell=1}^\infty \Delta f/B=\infty$. Therefore, there exists a large enough $t_0$ for which $\sum_{k=1}^{t_0} \Delta f_k/b_k\leq \epsilon$ cannot be satisfied for any $t\geq t_0$. 
\end{proof}

\begin{corollary} \label{cor:1} Assume that $\Delta f_k=\Delta f$ for all $k\in\mathbb{N}$. The reporting mechanism~\eqref{eqn:2} is $\epsilon$-differentially private for $\epsilon>0$ if 
\begin{align}
b_k=\frac{\Delta f \pi^2k^2}{6\epsilon},\quad \forall k\in\mathbb{N}.
\end{align}
\end{corollary}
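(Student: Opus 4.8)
The plan is to invoke Theorem~\ref{tho:2} directly, so that it suffices to verify the summability condition $\sum_{k=1}^t \Delta f_k / b_k \leq \epsilon$ for every $t$. First I would substitute the hypotheses $\Delta f_k = \Delta f$ and $b_k = \Delta f \pi^2 k^2 / (6\epsilon)$ into the summand, observing that $\Delta f$ cancels and that each term reduces to $(6\epsilon/\pi^2)(1/k^2)$. This turns the finite partial sum into $(6\epsilon/\pi^2)\sum_{k=1}^t k^{-2}$.

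The one nontrivial ingredient is the classical Basel identity $\sum_{k=1}^\infty k^{-2} = \pi^2/6$, which I would cite as a known fact. Since all terms are positive, the partial sums are monotone and bounded above by $\pi^2/6$, hence $\sum_{k=1}^t k^{-2} \leq \pi^2/6$ for every $t \in \mathbb{N}$. Multiplying through by $6\epsilon/\pi^2$ gives $\sum_{k=1}^t \Delta f_k / b_k \leq \epsilon$, and the conclusion follows from Theorem~\ref{tho:2}.

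There is essentially no obstacle here; the only thing to be careful about is that the bound must hold uniformly in $t$ (since differential privacy is required for all horizons), which is automatic because the tail of the Basel series is nonnegative. One could also remark that any choice $b_k = c \, k^{1+\delta}$ with $\delta > 0$ and a suitably large constant $c$ would work for the same reason, but the quadratic choice is the cleanest because it matches the Basel sum exactly; I would keep the proof to the stated case.
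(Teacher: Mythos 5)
Your proposal is correct and follows the same route as the paper: plug the given $b_k$ into the condition of Theorem~\ref{tho:2} and bound the partial sums by the Basel series $\sum_{k\geq 1}k^{-2}=\pi^2/6$, which gives exactly the budget $\epsilon$ uniformly in $t$. This matches the paper's own proof, which sets $b_k=bk^2$ and checks $(\Delta f/b)\pi^2/6\leq\epsilon$.
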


\begin{proof} Let $b_k=bk^2$. Hence, $\sum_{k=1}^t \Delta f_k/b_k\leq \sum_{k=1}^\infty \Delta f_k/b_k=(\Delta f/b)\pi^2/6$. Therefore, we can satisfy the condition of Theorem~\ref{tho:2} for any $t$ if $(\Delta f/b)\pi^2/6\leq \epsilon$. 
\end{proof}

These corollaries show that it might be necessary for the magnitude of the privacy-preserving additive noise to grow unbounded if we want to ensure differential privacy over an infinite horizon. This negative result could be caused by that the privacy loss at time instants $k$ and $t$ are weighted equally even if $k\ll t$ and, at $t$, the information leakage at time $k$ is no longer relevant. This motivates discounting privacy losses across time to relax the definition of differential privacy.

\section{Discounted Differential Privacy} \label{sec:discounted}
Humans perceive the severity (i.e., magnitude) of losses and gains differently across time. This fact is captured in the economics literature, especially, within expected utility theory, by discounting losses and gains that occur a long time from now.

\subsection{Exponentially Discounted Differential Privacy}
We start with exponentially discounted privacy loss and differential privacy. 

\begin{definition}[Exponentially Discounted Privacy Loss] At time instant $t$, privacy loss due to entry $y(k)$ is 
\begin{align*}
\varrho(k,t)
&=\alpha^{t-k}\rho(k),
\end{align*}
where $\alpha\in(0,1]$ is the discount factor. 
\end{definition}

Instead of privacy loss, we can use discounted privacy loss to ensure a certain level of privacy. Note that, at any time instant $t$, we have
$\sum_{k=1}^t \varrho(k,t)
=\sum_{k=1}^t\alpha^{t-k}\rho(k). $
Using this, we can define exponentially discounted differential privacy.

\begin{definition}[Exponentially Discounted Differential Privacy] A reporting mechanism is $(\epsilon,\alpha)$-exponentially discounted differentially private for $\epsilon>0$ and $\alpha\in(0,1]$ if 
\begin{align}
\sum_{k=1}^t\alpha^{t-k}\rho(k)\leq \epsilon.
\end{align}
\end{definition}

Note that $(\epsilon,1)$-exponentially discounted differential privacy is equivalent to $\epsilon$-differential privacy. 

\begin{theorem} \label{tho:3} The reporting mechanism~\eqref{eqn:2} is $(\epsilon,\alpha)$-exponentially discounted differentially private for $\epsilon>0$ and $\alpha\in(0,1]$ if
\begin{align}
\sum_{k=1}^t\alpha^{t-k}\bigg(\frac{\Delta f_k}{b_k}\bigg)\leq \epsilon.
\end{align}
\end{theorem}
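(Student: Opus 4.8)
The plan is to reuse the sensitivity computation already carried out in the proof of Theorem~\ref{tho:2} and then simply fold in the exponential discount weights. First I would recall that, for the Laplace reporting mechanism in~\eqref{eqn:2}, the chain of inequalities in the proof of Theorem~\ref{tho:2} established
\begin{align*}
\exp(\rho(k)) \leq \exp\left(\frac{\Delta f_k}{b_k}\right),
\end{align*}
and hence, taking logarithms, $\rho(k) \leq \Delta f_k/b_k$ for every $k$. This is the only property of the mechanism that the argument needs.

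Next I would multiply this pointwise bound by the discount factor $\alpha^{t-k}$. Since $\alpha\in(0,1]$ gives $\alpha^{t-k}\geq 0$ for all $1\leq k\leq t$, the inequality is preserved under this scaling, so $\alpha^{t-k}\rho(k)\leq \alpha^{t-k}(\Delta f_k/b_k)$. Summing over $k$ from $1$ to $t$ yields
\begin{align*}
\sum_{k=1}^t \alpha^{t-k}\rho(k) \leq \sum_{k=1}^t \alpha^{t-k}\left(\frac{\Delta f_k}{b_k}\right).
\end{align*}
The hypothesis of the theorem bounds the right-hand side by $\epsilon$, so the left-hand side — which is exactly $\sum_{k=1}^t\varrho(k,t)$ — is at most $\epsilon$, and by the definition of exponentially discounted differential privacy this is precisely the claim.

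I do not anticipate a genuine obstacle: the statement is a one-line consequence of the already-established bound $\rho(k)\leq \Delta f_k/b_k$ together with non-negativity of the discount weights. The only point worth stating carefully is that the bound must hold for every horizon $t$ at which privacy is claimed, which is immediate since the hypothesis is assumed for that $t$. As a cleaner alternative one could first prove a discounted analogue of Theorem~\ref{tho:1} (with $\Delta f_k/b_k$ replaced by $\rho(k)$) and then specialize to the Laplace case, but the direct route above is the shortest.
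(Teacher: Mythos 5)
Your proposal is correct and follows essentially the same route as the paper: the paper's proof simply says to repeat the argument of Theorem~\ref{tho:2} (which gives $\rho(k)\leq \Delta f_k/b_k$ for the Laplace mechanism) with the definition of $(\epsilon,\alpha)$-exponentially discounted differential privacy in place of $\epsilon$-differential privacy, which is exactly the weighted-sum argument you spell out. Your version just makes the implicit steps (nonnegativity of the weights and summation) explicit.
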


\begin{proof} The proof follows the same line of reasoning as in Theorem~\ref{tho:2} while substituting the definition of  $\epsilon$-differential privacy with the definition of $(\epsilon,\alpha)$-exponentially discounted differential privacy. 
\end{proof}

\begin{corollary} \label{cor:2} Assume that $\Delta f_k=\Delta f$ for all $k\in\mathbb{N}$. The reporting mechanism~\eqref{eqn:2} is $(\epsilon,\alpha)$-exponentially discounted differentially private for $\epsilon>0$ and $\alpha\in(0,1)$ if 
\begin{align}
b_k=\frac{\Delta f }{\epsilon(1-\alpha)},\quad \forall k\in\mathbb{N}.
\end{align}
\end{corollary}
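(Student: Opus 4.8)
The plan is to invoke Theorem~\ref{tho:3} and verify its hypothesis directly, since the constant choice of $b_k$ collapses the defining sum into a truncated geometric series. First I would set $b:=\Delta f/(\epsilon(1-\alpha))$ and observe that, because $\Delta f_k=\Delta f$ and $b_k=b$ for all $k$, the summand $\alpha^{t-k}(\Delta f_k/b_k)$ equals $(\Delta f/b)\,\alpha^{t-k}$, i.e.\ a fixed constant times a power of the discount factor.

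Next I would reindex the sum by $j=t-k$, so that $\sum_{k=1}^t\alpha^{t-k}(\Delta f_k/b_k)=(\Delta f/b)\sum_{j=0}^{t-1}\alpha^{j}$. Since $\alpha\in(0,1)$, this partial sum is bounded above, uniformly in $t$, by the full geometric series $\sum_{j=0}^{\infty}\alpha^{j}=1/(1-\alpha)$. This is the single place where strict inequality $\alpha<1$ (rather than $\alpha=1$, which recovers ordinary differential privacy) is used: it is exactly what makes the series summable and hence a bounded, constant noise scale sufficient.

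Finally I would substitute $b=\Delta f/(\epsilon(1-\alpha))$ into the bound $(\Delta f/b)\cdot 1/(1-\alpha)$ and simplify, which gives precisely $\epsilon$. Hence $\sum_{k=1}^t\alpha^{t-k}(\Delta f_k/b_k)\leq\epsilon$ for every $t\in\mathbb{N}$, and Theorem~\ref{tho:3} delivers the claim. I do not expect any genuine obstacle here — the argument is a bounded geometric sum followed by an algebraic substitution; the only subtlety worth flagging explicitly is that the bound is uniform over the horizon $t$, which is exactly the feature that separates this corollary from Corollary~\ref{cor:1}, where $b_k$ was forced to grow like $k^2$.
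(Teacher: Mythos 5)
Your argument is correct and is essentially identical to the paper's proof: both invoke Theorem~\ref{tho:3}, reduce the sum with constant $b_k=b$ to a truncated geometric series bounded uniformly in $t$ by $1/(1-\alpha)$, and then verify that $b=\Delta f/(\epsilon(1-\alpha))$ makes the bound exactly $\epsilon$. No gaps to report.
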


\begin{proof} Let $b_k=b$. Hence, $\sum_{k=1}^t \alpha^{t-k}\Delta f_k/b_k=(\Delta f/b)\sum_{k=0}^{t-1} \alpha^{k}\leq (\Delta f/b)\sum_{k=0}^{\infty} \alpha^{k}=(\Delta f/b)/(1-\alpha)$. Therefore, we can satisfy the condition of Theorem~\ref{tho:3} for any $t\in\mathbb{N}$ if $(\Delta f/b)/(1-\alpha)\leq \epsilon$. 
\end{proof}

Note that, in Corollary~\ref{cor:2}, the magnitude of the additive noise remains bounded. Therefore, the quality of the reports do not degrade with time, which is a drawback of adopting most notions of privacy for evolving datasets.

\subsection{Hyperbolic Discounted Differential Privacy}
As stated in the introduction, it has been shown that humans and animals follow a hyperbolic discounting regime. This motivate defining hyperbolic discounted differential privacy. 

\begin{definition}[Hyperbolic Discounted Privacy Loss] At time instant $t$, privacy loss due to entry $y(k)$ is 
\begin{align*}
\varrho'(k,t)
&=\frac{\rho(k)}{1+\beta(t-k)}
\end{align*}
where $\beta\geq 0$ is the discounting coefficient. 
\end{definition}

In this case, we have
$
\sum_{k=1}^t \varrho'(k,t)
=
\sum_{k=1}^t \rho(k)/(1+\beta k).
$
Using this, we can define hyperbolic discounted differential privacy.

\begin{definition}[hyperbolic Discounted Differential Privacy] A reporting mechanism is $(\epsilon,\beta)$-hyperbolic discounted differentially private for $\epsilon>0$ and $\beta>0$ if 
\begin{align}
\sum_{k=1}^t \frac{1}{1+\beta (t-k)} \rho(k)\leq \epsilon.
\end{align}
\end{definition}

Note that $(\epsilon,0)$-hyperbolic discounted differential privacy is equivalent to $\epsilon$-differential privacy. 

\begin{theorem} \label{tho:4} The reporting mechanism in~\eqref{eqn:2} is $(\epsilon,\beta)$-hyperbolic discounted differentially private for $\epsilon>0$ and $\beta\geq 0$ if
\begin{align}
\sum_{k=1}^t \frac{1}{1+\beta (t-k)}\frac{\Delta f_k}{b_k}\leq \epsilon.
\end{align}
\end{theorem}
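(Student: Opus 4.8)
The plan is to reuse the proof of Theorem~\ref{tho:2} almost verbatim, changing only the time-aggregation step so that it matches the definition of $(\epsilon,\beta)$-hyperbolic discounted differential privacy rather than that of plain $\epsilon$-differential privacy. The point is that the per-time-step privacy loss $\rho(k)$ carries no discounting at all — discounting enters only when the losses are combined across time — so the mechanism-specific computation is identical to the one already done.

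Concretely, the first step is to reproduce the bound from the proof of Theorem~\ref{tho:2}: since the reporting mechanism~\eqref{eqn:2} adds a zero-mean Laplace noise of scale $b_k$ at time $k$, the same chain of inequalities (using $|a|-|b|\leq|a-b|$ and the definition of the sensitivity $\Delta f_k$) gives $\exp(\rho(k))\leq\exp(\Delta f_k/b_k)$, hence
\begin{align*}
\rho(k)\leq\frac{\Delta f_k}{b_k},\qquad k=1,\dots,t.
\end{align*}
The second step is to multiply the inequality for each $k$ by the weight $1/(1+\beta(t-k))$, which is nonnegative because $\beta\geq0$ and $1\leq k\leq t$, and sum over $k$, obtaining
\begin{align*}
\sum_{k=1}^t\frac{1}{1+\beta(t-k)}\,\rho(k)\;\leq\;\sum_{k=1}^t\frac{1}{1+\beta(t-k)}\,\frac{\Delta f_k}{b_k}\;\leq\;\epsilon,
\end{align*}
where the last inequality is exactly the hypothesis. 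By the definition of hyperbolic discounted differential privacy, the left-hand side being bounded by $\epsilon$ for every $t$ is precisely the claimed conclusion, so we are done.

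I do not expect a genuine obstacle here: the statement is a routine transcription of Theorems~\ref{tho:2} and~\ref{tho:3} with the exponential weights $\alpha^{t-k}$ replaced by the hyperbolic weights $1/(1+\beta(t-k))$, and the only thing to verify carefully is nonnegativity of those weights so that the termwise bound survives the weighted summation. The substantive question — which I would expect the companion corollary to address, not this theorem — is whether one can choose $b_k$ so as to satisfy the hypothesis uniformly in $t$; unlike the exponential case of Corollary~\ref{cor:2}, a constant noise scale $b_k=b$ will not work, since the substitution $j=t-k$ turns the discounted sum into $(\Delta f/b)\sum_{j=0}^{t-1}1/(1+\beta j)$, which diverges like a harmonic series as $t\to\infty$. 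One should therefore expect the corollary to require $b_k$ growing with $k$, but only very slowly (essentially logarithmically), which is still a dramatic improvement over the $\mathcal{O}(k^2)$ growth forced in the undiscounted setting.
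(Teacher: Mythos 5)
Your proof is correct and is essentially the paper's own argument: the paper's proof of Theorem~\ref{tho:4} simply says to repeat the reasoning of Theorem~\ref{tho:2} with the discounted definition substituted, which is exactly what you carry out (bound $\rho(k)\leq \Delta f_k/b_k$ via the Laplace density and sensitivity, then take the nonnegative weighted sum). Your closing aside is also sound in spirit, though the paper's Corollary~\ref{cor:3} opts for $b_k\propto\sqrt{k}$ rather than logarithmic growth, trading slack for an explicit closed-form constant via $\atanh$.
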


\begin{figure*}[t]
	\centering
	\begin{tikzpicture}
	\node[] at (0,0) {\includegraphics[width=.75\columnwidth]{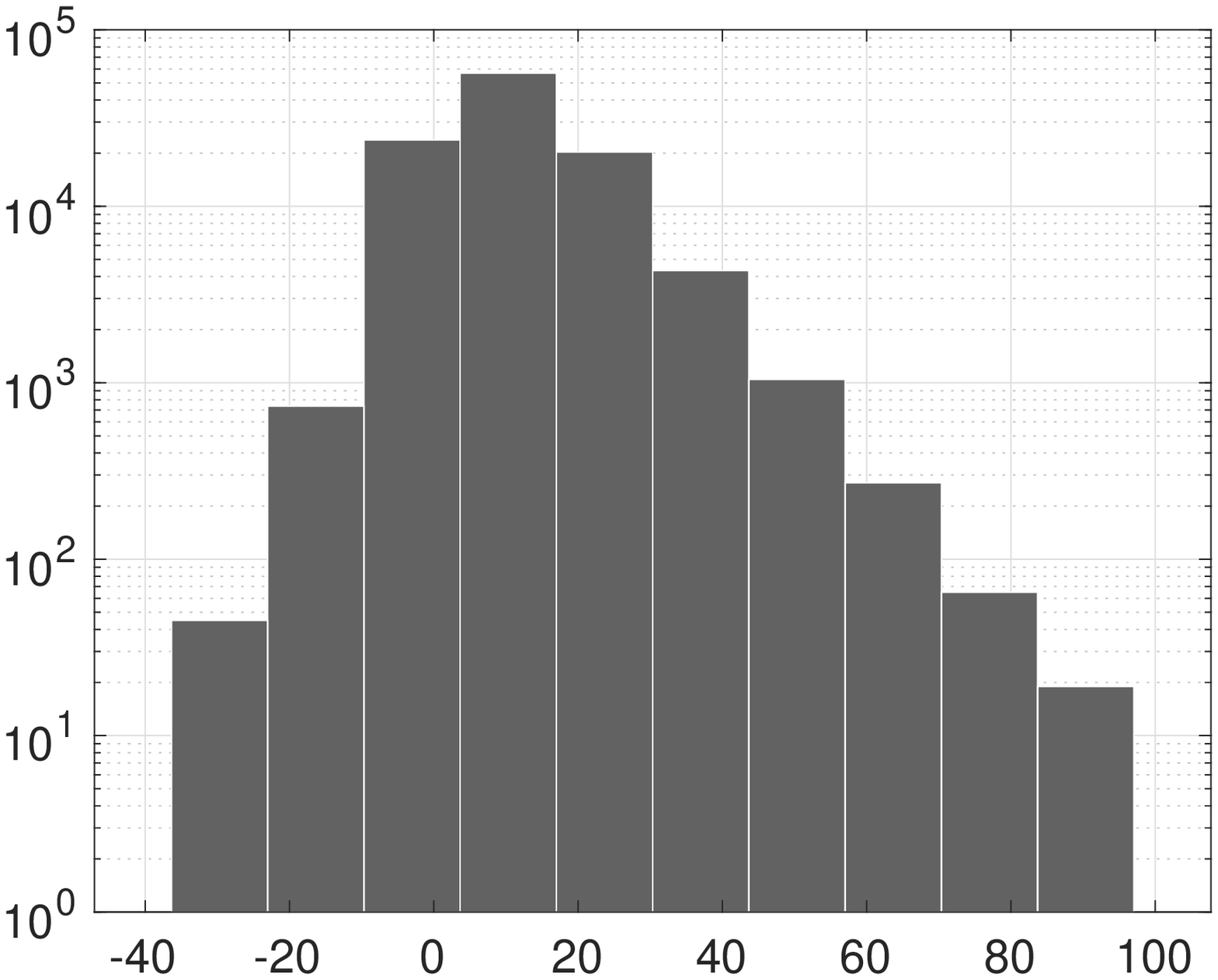}};
	\node[] at (0,-2.6) {\footnotesize smart-meter measurements (bins)};
	\node[rotate=90] at (-3.3,0) {\footnotesize frequency};
	\node[] at (7,0) {\includegraphics[width=.75\columnwidth]{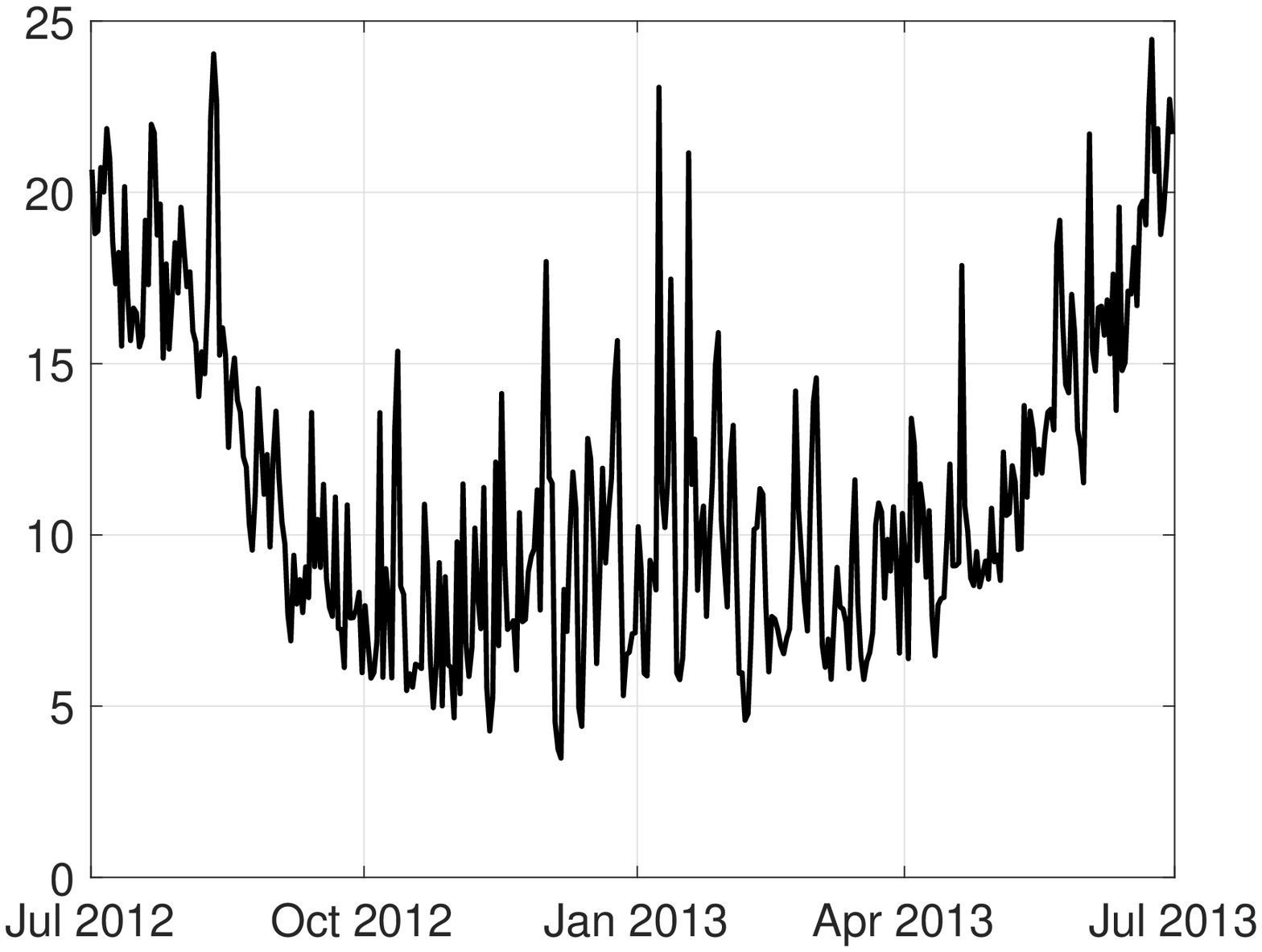}};
	\node[] at (7,-2.6)  {\footnotesize date};
	\node[rotate=90] at (+3.7,0) {\footnotesize daily average consumption};
	\end{tikzpicture}
	\caption{
		\label{fig:ausgrid_stat}
		Statistics of the Ausgrid data: [left] histogram of the smart-meter measurements of the households across the year and [right] average daily smart-meter measurements of the households. }
	\vspace{-.2in}
\end{figure*}

\begin{proof} The proof follows the same line of reasoning as in Theorem~\ref{tho:2} while substituting the definition of  $\epsilon$-differential privacy with the definition of $(\epsilon,\beta)$-hyperbolic discounted differential privacy. 
\end{proof}

\begin{corollary} \label{cor:3} Assume that $\Delta f_k=\Delta f$. The reporting mechanism in~\eqref{eqn:2} is $(\epsilon,\beta)$-hyperbolic discounted differentially private for $\epsilon>0$ and $\beta\geq 0$ if 
\begin{align}
b_k=\frac{\displaystyle
	2\Delta f\hspace{-.02in} \left(\hspace{-.03in}\atanh\left(\hspace{-.03in}\frac{1}{\sqrt{3}}\hspace{-.02in}\right)\hspace{-.03in}+\hspace{-.03in}\atanh\left(\hspace{-.03in}\sqrt{\frac{\beta}{1\hspace{-.02in}+\hspace{-.02in}\beta}}\right)\hspace{-.04in}\right)\hspace{-.03in}\sqrt{k}}{\epsilon\sqrt{\beta(\beta+1)}},\quad \hspace{-.03in}\forall k\in\mathbb{N}.
\end{align}
\end{corollary}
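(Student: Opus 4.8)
The plan is to invoke Theorem~\ref{tho:4}. Since $\Delta f_k=\Delta f$ for every $k$ and the proposed scale has the form $b_k=b\sqrt{k}$ with
\[
b:=\frac{2\Delta f\left(\atanh(1/\sqrt{3})+\atanh\big(\sqrt{\beta/(1+\beta)}\big)\right)}{\epsilon\sqrt{\beta(\beta+1)}},
\]
it suffices to prove that $S(t):=\sum_{k=1}^{t}\big(\sqrt{k}\,(1+\beta(t-k))\big)^{-1}\le b\epsilon/\Delta f$ for all $t\in\mathbb{N}$ (for $\beta=0$ the statement collapses to Corollary~\ref{cor:1}, so I assume $\beta>0$).

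First I would examine the continuous function $g(x):=\big(\sqrt{x}\,(1+\beta(t-x))\big)^{-1}$ on $(0,t]$. Setting $(\log g)'=0$ gives a single critical point $x^{*}=(1+\beta t)/(3\beta)$, and because $(\log g)'$ is increasing, $g$ decreases on $(0,x^{*}]$ and increases on $[x^{*},t]$; thus the summand $g(k)$ is unimodal in $k$, a single interior valley. I would bound $S(t)$ by integrals using the elementary monotone comparisons $g(k)\le\int_{k-1}^{k}g$ over the decreasing stretch and $g(k)\le\int_{k}^{k+1}g$ over the increasing stretch, keeping track of the endpoint terms near $k=1$ and $k=t$ that these comparisons leave uncovered, and treating separately the degenerate regimes $x^{*}\le 1$ and $x^{*}\ge t$ (which arise only for small $t$), where $g$ is already monotone on $[1,t]$.

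The resulting integrals are explicit: the substitution $u=\sqrt{x}$ turns $\int g(x)\,dx$ into $\tfrac{2}{\beta}\int(a^{2}-u^{2})^{-1}\,du$ with $a:=\sqrt{(1+\beta t)/\beta}$, whose antiderivative is $\tfrac{2}{\beta a}\atanh(u/a)$. Evaluating at the valley yields $\sqrt{x^{*}}/a=1/\sqrt{3}$ — the source of the $\atanh(1/\sqrt{3})$ factor — while evaluating at $x=t$ yields $\sqrt{t}/a=\sqrt{\beta t/(1+\beta t)}$. Collecting the pieces gives a $t$-dependent bound of the shape $\tfrac{2}{\sqrt{\beta(1+\beta t)}}\big(\atanh(1/\sqrt{3})+\atanh(\sqrt{\beta t/(1+\beta t)})\big)$. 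Since $\sqrt{\beta(1+\beta t)}$ grows like $\sqrt{t}$ whereas $\atanh(\sqrt{\beta t/(1+\beta t)})\le\tfrac12\log(1+\beta t)$, this bound is nonincreasing in $t$, so its supremum over $t\in\mathbb{N}$ is attained at $t=1$, where it equals $b\epsilon/\Delta f$. Together with Theorem~\ref{tho:4} this establishes the corollary.

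The step I expect to be the main obstacle is the sum-to-integral passage: arranging it so that the leftover boundary contributions at $k=1$ and $k=t$ and the rounding around the non-integer minimizer $x^{*}$ are absorbed tightly enough to preserve the precise constant, and then confirming that the closed-form $t$-indexed bound is genuinely largest at $t=1$ rather than at an intermediate horizon. The reduction through Theorem~\ref{tho:4} and the $u=\sqrt{x}$ quadrature are otherwise routine.
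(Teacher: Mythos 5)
Your route is the same as the paper's: split the unimodal summand at the interior minimiser $x^{*}=(1+\beta t)/(3\beta)$ (the paper uses $t_0=\lfloor(\beta t+1)/(3\beta)\rfloor$), compare each monotone stretch of the sum with an integral, evaluate via $u=\sqrt{x}$ to get the two $\atanh$ terms over $\sqrt{\beta(\beta t+1)}$, and finally replace the resulting $t$-dependent bound by its value at $t=1$. Up to bookkeeping at the endpoints, that is exactly the published argument.

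The genuine gap is the last step, which you yourself flag as the thing to confirm. The reasoning ``the denominator grows like $\sqrt{t}$ while the numerator grows at most like $\tfrac12\log(1+\beta t)$, hence the bound is nonincreasing'' only shows that the bound tends to zero as $t\to\infty$; it does not give monotonicity starting from $t=1$, and the monotonicity is in fact false in general. Writing $v=\sqrt{\beta t/(1+\beta t)}$, the derivative of $t\mapsto \atanh(v)/\sqrt{\beta(\beta t+1)}$ works out to $\frac{\sqrt{\beta}}{2(\beta t+1)^{3/2}}\bigl(\tfrac{1}{v}-\atanh(v)\bigr)$, which is positive until roughly $\beta t\approx 2.3$ (where $\atanh(v)=1/v$). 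So for small $\beta$ your closed-form bound increases well past its $t=1$ value: e.g.\ for $\beta=0.1$ it is about $5.8$ at $t=1$ but about $6.9$ at $t=10$. Hence the supremum over $t\in\mathbb{N}$ is not attained at $t=1$, the comparison with $b\epsilon/\Delta f$ breaks down as written, and closing the argument would require maximising the bound over all $t$ (the maximum sits near $\beta t\approx 2.3$) or enlarging the constant in $b_k$. Be aware that the paper's own proof closes this same step by asserting the derivative is proportional to $v-\atanh(v)<0$; a direct computation gives the factor $1/v-\atanh(v)$ instead, so your instinct that this is the delicate point is well placed --- it is not something that can simply be ``confirmed'' as hoped, for either your write-up or the paper's.
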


\begin{proof} Let $b_k=b\sqrt{k}$. By computing the derivatives, we can check that $\Delta f_k/((1+\beta (t-k))(b\sqrt{k}))$ is decreasing up to $(\beta t+1)/3\beta$ and is increasing afterwards. Let us define $t_0=\lfloor (\beta t+1)/(3\beta)\rfloor$. We have
\begin{align*}
\sum_{k=1}^{t_0}\frac{1}{1+\beta (t-k)}\frac{\Delta f_k}{b_k}
\leq& \frac{\Delta f}{b} \int_{0}^{t_0-1} \frac{1}{1+\beta (t-x)}\frac{1}{\sqrt{x}} \mathrm{d}x\\
=&\frac{2\Delta f}{b}\frac{\displaystyle\atanh\left(\sqrt{\frac{\beta (t_0-1)}{1+\beta t}}\right) }{\sqrt{\beta(\beta t +1)}}.
\end{align*}
Since $\atanh(\cdot)$ is an increasing function and $t_0-1\leq (\beta t+1)/(3\beta)$, we have
\begin{align*}
\atanh\left(\sqrt{\frac{\beta (t_0-1)}{1+\beta t}}\right)
\leq \atanh\left(\frac{1 }{\sqrt{3}}\right)
\end{align*}
We can also show that
\begin{align*}
\sum_{k=t_0+1}^{t}&\frac{1}{1+\beta (t-k)}\frac{\Delta f_k}{b_k}\\
\leq& \frac{\Delta f}{b} \int_{t_0+1}^t \frac{1}{(1+\beta (t-x))\sqrt{x}} \mathrm{d}x\\
=&\frac{2\Delta f}{b}\frac{\displaystyle\left(\atanh\left(\sqrt{\frac{\beta t}{1+\beta t}}\right) - \atanh\left(\sqrt{\frac{\beta(t_0+1)}{1+\beta t}}\right)\right)}{\sqrt{\beta(\beta t +1)}}\\
\leq &\frac{2\Delta f}{b}\frac{\displaystyle \atanh\left(\sqrt{\frac{\beta t}{1+\beta t}}\right) }{\sqrt{\beta(\beta t +1)}}.
\end{align*}
%
Note that
\begin{align*}
\frac{\mathrm{d}}{\mathrm{d}t}\frac{\displaystyle \atanh\left(\sqrt{\frac{\beta t}{1+\beta t}}\right) }{\sqrt{\beta(\beta t +1)}}
=&\frac{\sqrt{\beta t} }{\sqrt{t}(\beta t +1)^{3/2}}\left(
\sqrt{\frac{\beta t}{1+\beta t}}\right. \\
&\left.
-\atanh\left(\sqrt{\frac{\beta t}{1+\beta t}}\right) 
 \right)
 <0,
\end{align*}
where the inequality follows from that $\atanh(x)>x$ for all $x>0$. Therefore,
\begin{align*}
\frac{\displaystyle \atanh\left(\sqrt{\frac{\beta t}{1+\beta t}}\right) }{\sqrt{\beta(\beta t +1)}}\leq \frac{\displaystyle \atanh(\sqrt{\beta/(1+\beta)}) }{\sqrt{\beta(\beta+1)}}.
\end{align*}
This implies that 
\begin{align*}
\sum_{k=1}^{t}&\frac{1}{1+\beta (t-k)}\frac{\Delta f_k}{b_k}
\\&\leq \frac{2\Delta f}{b}\frac{\atanh(\sqrt{1/3})+\atanh(\sqrt{\beta/(1+\beta)})}{\sqrt{\beta(\beta+1)}}.
\end{align*}
The rest of the proof follows from the application of Theorem~\ref{tho:4}.
\end{proof}

\begin{figure*}[t]
	\centering
	\hspace{-.15in}
	\begin{tikzpicture}
	\node[] at (0,0) {\includegraphics[width=.7\columnwidth]{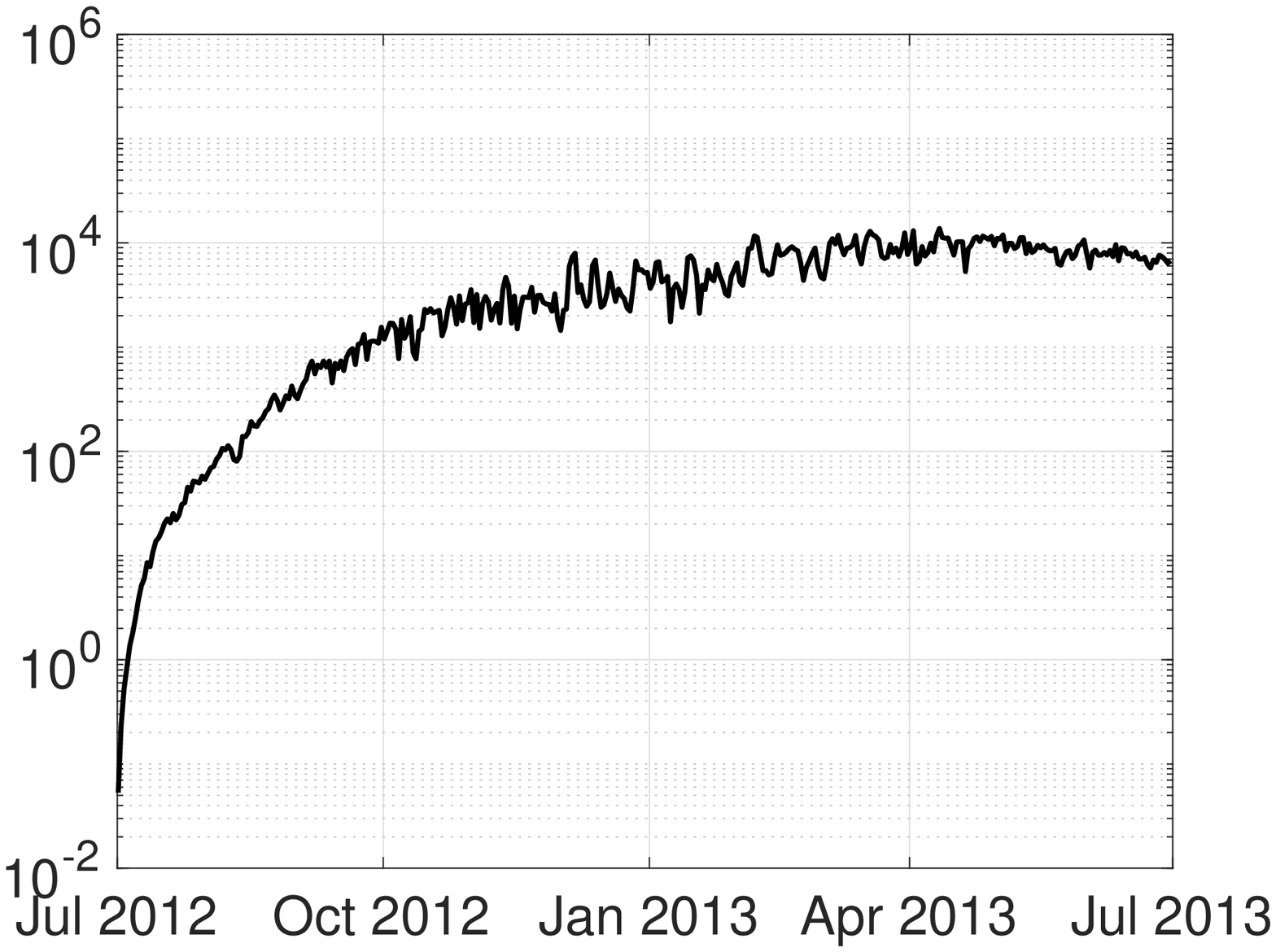}};
	\node[] at (0,-2.4) {\footnotesize date};
	\node[rotate=90] at (-2.95,0) {\footnotesize Expected relative error};
	\node[] at (6.1,0) {\includegraphics[width=.7\columnwidth]{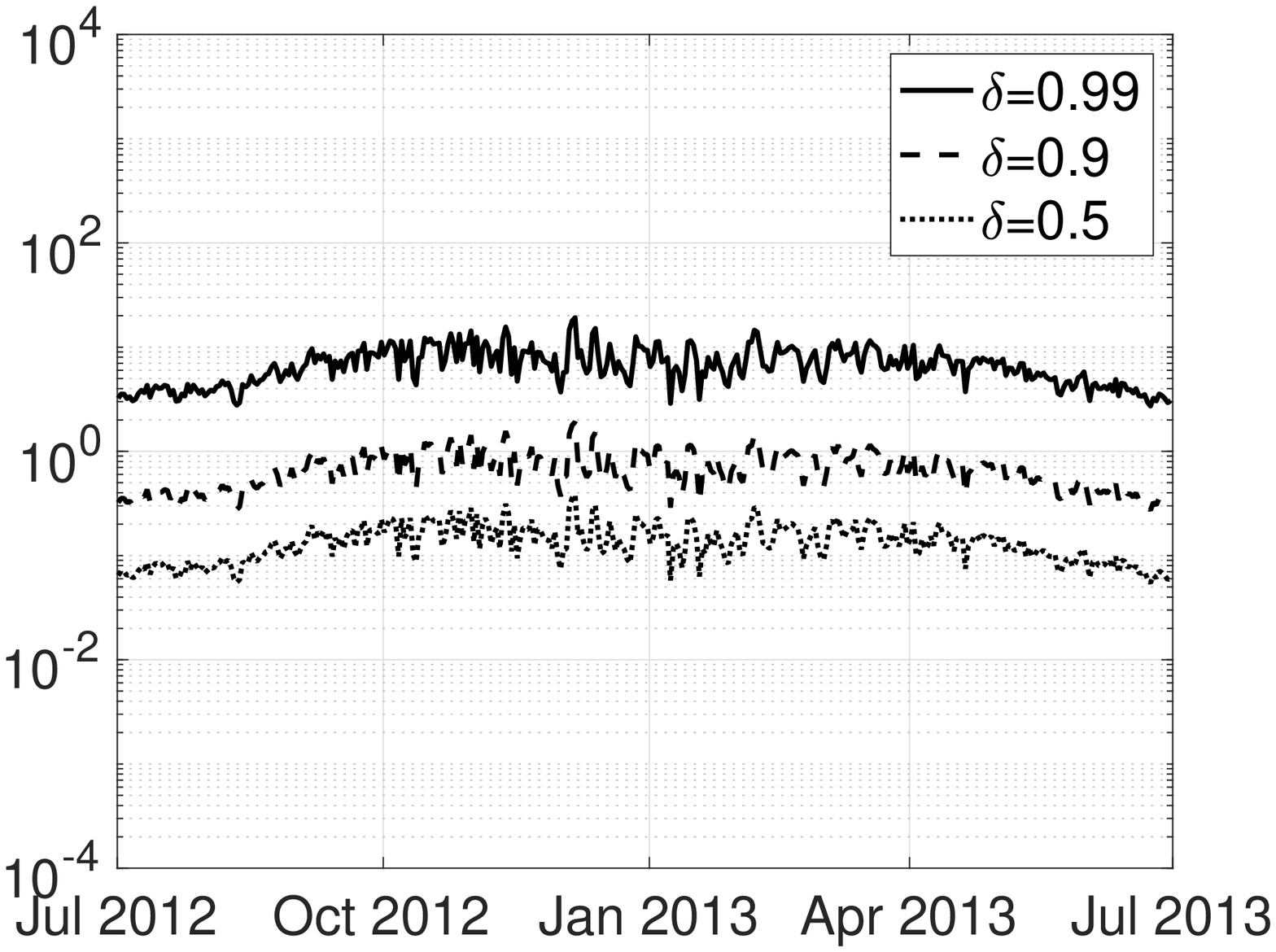}};
	\node[] at (6.1,-2.4) {\footnotesize date};
	\node[rotate=90] at (3.1,0) {\footnotesize Expected relative error};
	\node[] at (12.2,0) {\includegraphics[width=.7\columnwidth]{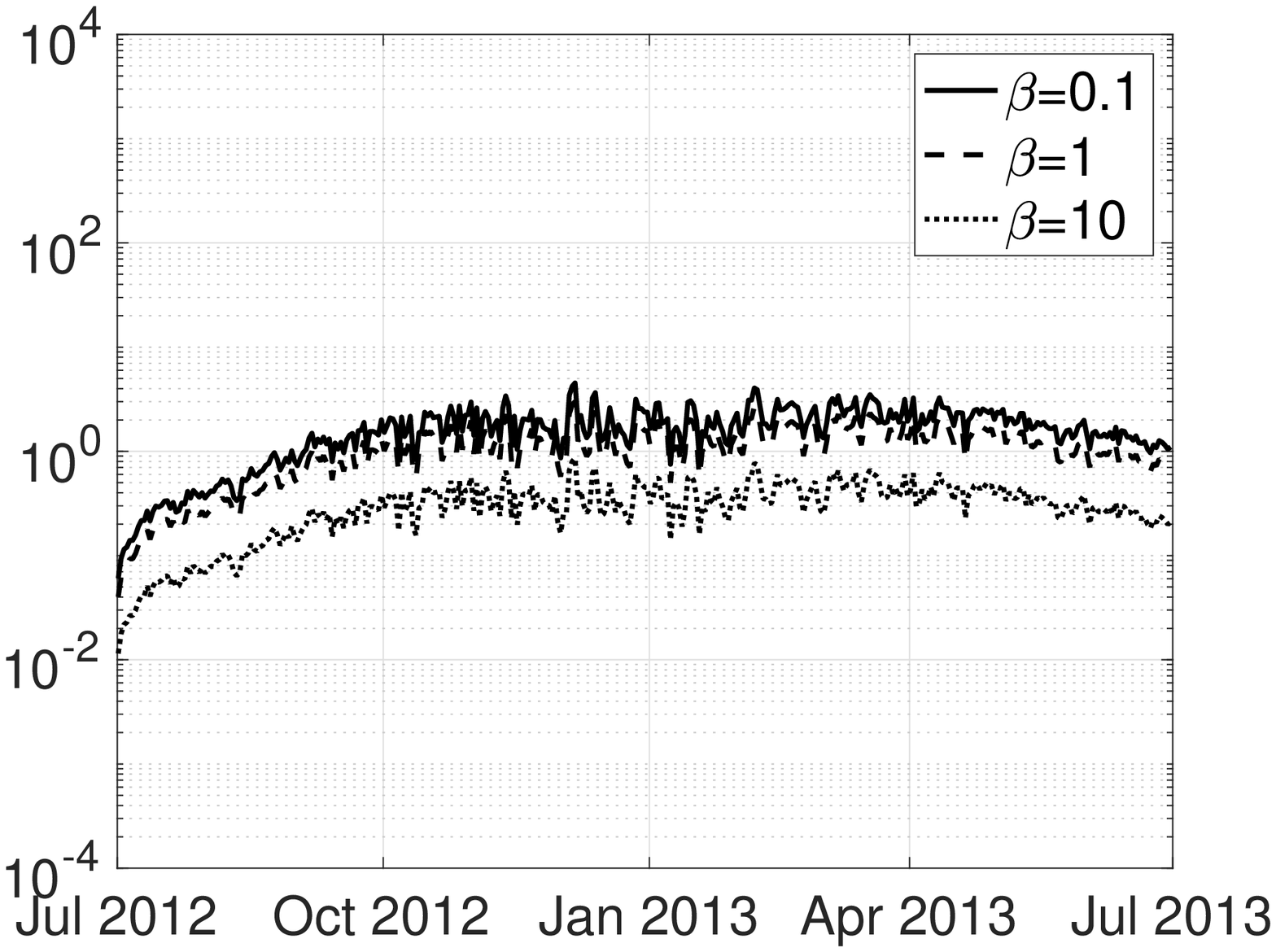}};
	\node[] at (12.2,-2.4) {\footnotesize date};
	\node[rotate=90] at (9.15,0) {\footnotesize Expected relative error};
	\end{tikzpicture}
	\caption{\label{fig:ausgrid_privacy} Quality of reports for the Ausgrid data under various notions of privacy: differential privacy [left], exponentially discounted differential privacy [middle], and hyperbolic discounted differential privacy [right].}
		\vspace{-.2in}
\end{figure*}
In Corollary~\ref{cor:3}, the magnitude of the additive noise grows unbounded but at a much slower rate than Corollary~\ref{cor:1}. Therefore, the quality of the reports, although degrading with time, remain better than differential privacy.

\begin{figure*}[t]
	\centering
	\begin{tikzpicture}
	\node[] at (0,0) {\includegraphics[width=.75\columnwidth]{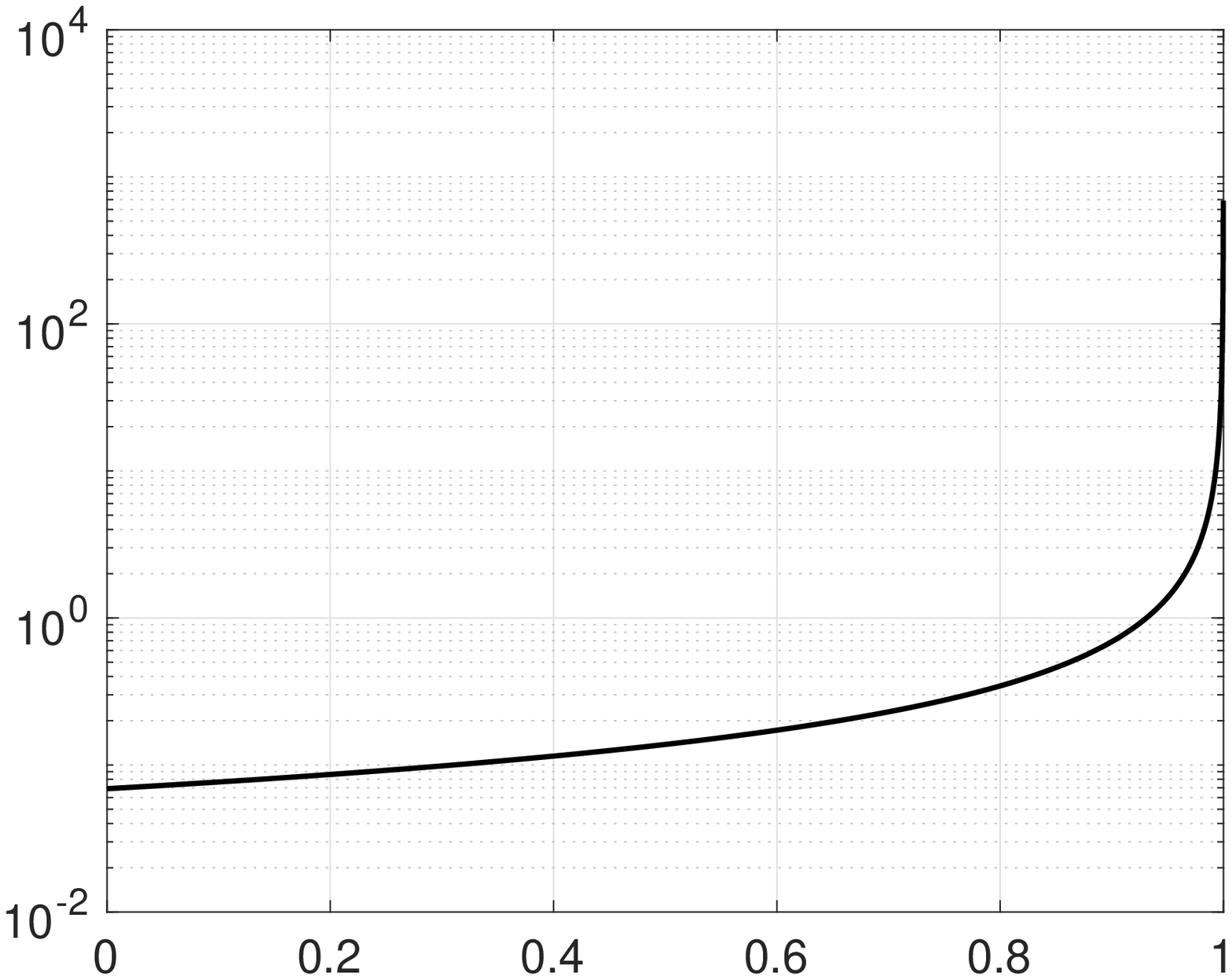}};
	\node[] at (0,-2.5) {\footnotesize discount factor $\alpha$};
	\node[rotate=90] at (-3.3,0) {\footnotesize Average expected relative error};
	\node[] at (7,0) {\includegraphics[width=.75\columnwidth]{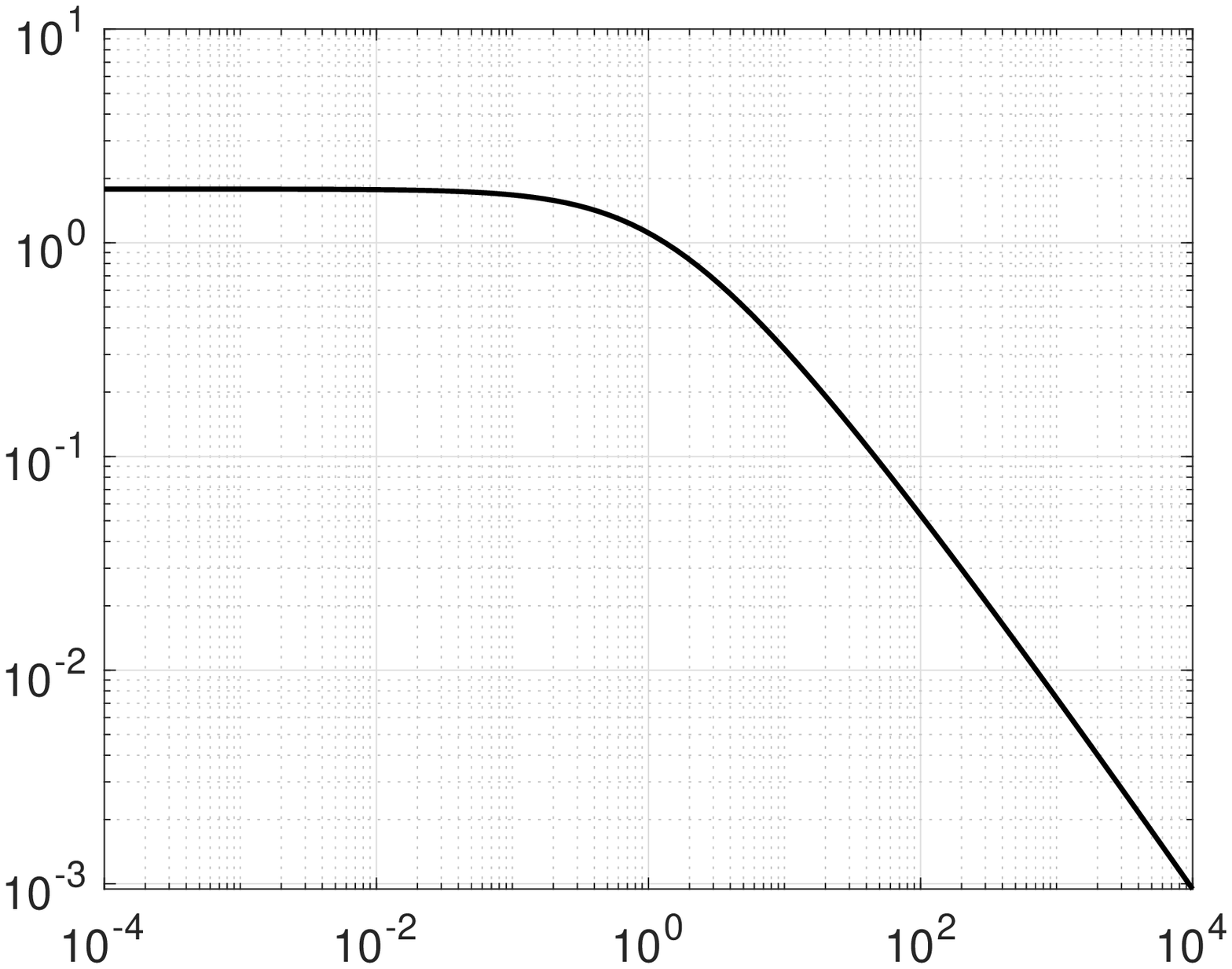}};
	\node[] at (7,-2.5) {\footnotesize discounting coefficient $\beta$};
	\node[rotate=90] at (3.5,0) {\footnotesize Average expected relative error};
	\end{tikzpicture}
	\caption{\label{fig:ausgrid_privacy_tradeoff}
		Quality of reports  for the Ausgrid data  versus [left] discount factor $\alpha$ and [right] discounting coefficient $\beta$. }
\end{figure*}

\section{Numerical Results} \label{sec:experiment}
In this section, we illustrate the results of the paper on regular smart-meter measurements from real households. We numerically investigate the quality of privacy-preserving reports ensuring discounted differential privacy using the expected difference between noisy privacy-preserving daily averages and true potentially privacy-intrusive average consumption of the households. 

\subsection{Data}
We use smart meter measurements of of households, made available by the~\cite{ausgrid} to illustrate the results of this paper. The individuals in both these datasets have been de-identified. The dataset in~\cite{ausgrid} contains electricity data for 300 homes with rooftop solar systems that are measured by a smart meter that, in addition to measuring the usage from the grid,  records the total amount of solar power generated. The measurements are obtained every 30 minutes over 2010-2013. In this paper, we use the data over July 2012 to June 2013. Figure~\ref{fig:ausgrid_stat} illustrates the statistics of the Ausgrid data. Figure~\ref{fig:ausgrid_stat}~[left] shows the histogram of the smart-meter measurements of the households across the year.  Figure~\ref{fig:ausgrid_stat}~[right] illustrates the  daily smart-meter measurements of the households, averaged across the individuals. This is the intended report that must to be released using the privacy-preserving mechanism in~\eqref{eqn:2}. Therefore, \vspace{-.05in}
\begin{align*}
f_t(X(t))=\frac{1}{n}\sum_{i=1}^n x_i(t).\vspace{-.05in}
\end{align*}
 It can be seen that, for this example, $\Delta f=200/300$ in which $200$ kWh is maximum changes in consumption across a household, according to Figure~\ref{fig:ausgrid_stat}~[left], and $300$ is the number of households. 

\subsection{Setup}
We present daily consumption of the households across the year, averaged for the individuals. We use the reporting mechanism~\eqref{eqn:2} to ensure the privacy of the households. We consider three setups of differential privacy, exponentially discounted differential privacy, and hyperbolic discounted differential privacy.  To this aim, we use the results of Corollaries~\ref{cor:1},~\ref{cor:2}, and~\ref{cor:3} to ensure that the reports are privacy preserving in their corresponding notions. We are interested in investigating the quality of the privacy-preserving reports, i.e., the difference between the reports and the average consumption of the households. Particularly, we use the expected relative error, defined as
\begin{align*}
\mbox{expected relative error}\;(t):=\frac{\displaystyle \mathbb{E}\left\{\left|y(t)-\frac{1}{n}\sum_{i=1}^n x_i(t)\right|\right\}}{\displaystyle \left|\frac{1}{n}\sum_{i=1}^n x_i(t)\right|},
\end{align*}
as a measure of quality. We also use average expected relative error, which is defined as
\begin{align*}
&\mbox{avrage expected relative error}\\
&\hspace{.8in}=\frac{1}{T}\sum_{t=1}^T \mbox{expected relative error}\;(t),
\end{align*}
where $T$ denotes the horizon of the experiment, one year in this paper.

\subsection{Results}
Figure~\ref{fig:ausgrid_privacy} shows the quality of reports in~\eqref{eqn:2} under various notions of privacy: differential privacy [left], exponentially discounted differential privacy [middle], and hyperbolic discounted differential privacy [right]. As expected, quality of reports for $\epsilon$-differential private reporting policy is bad and degrades with time, as the magnitude of the noise needs to be increased to ensure differential privacy. It is interesting to note that, in the hyperbolic discounted differential privacy, because the increase in the magnitude of the privacy-preserving noise is so low that we cannot observe its disruptive effect within 1 year. 

To quantify the effects of the discount factor and discounting coefficient in discounted differential privacy, we can study the average expected relative error in Figure~\ref{fig:ausgrid_privacy_tradeoff}. As expected, by increasing the discounting across time, which can be achieved by decreasing discount factor or increasing  discounting coefficient, the performance improves. However, the privacy guarantee also weakens as privacy losses from the past are dismissed faster, which might not be desirable. 

\section{Conclusions and Future Work} \label{sec:conclusions}
We defined discounted differential privacy to investigate privacy in the context of evolving datasets. We used exponential and hyperbolic discounting of privacy losses across time to relax the definition of differential privacy under continual observations. We used discounted differential privacy to investigate privacy of evolving datasets using an additive Laplace noise. We illustrate the quality of privacy-preserving mechanisms satisfying discounted differential privacy on smart-meter measurement. Future work can focus on capturing the effect of temporal discounting on the ability of an adversary to observe private information of an individual household. 

\section*{Acknowledgements}
The author is thankful to the anonymous reviewers for improving the presentation of the paper and for spotting some flaws in the preliminary proofs that was fixed in the final version of the paper.

\bibliographystyle{ieeetr}
\bibliography{citation}

\begin{thebibliography}{10}

\bibitem{mcdaniel2009security}
P.~McDaniel and S.~McLaughlin, ``Security and privacy challenges in the smart
  grid,'' {\em IEEE Security \& Privacy}, vol.~7, no.~3, pp.~75--77, 2009.

\bibitem{greveler2012multimedia}
U.~Greveler, B.~Justus, and D.~Loehr, ``Multimedia content identification
  through smart meter power usage profiles,'' {\em Computers, Privacy and Data
  Protection}, vol.~1, no.~10, 2012.

\bibitem{hosseini2017non}
S.~S. Hosseini, K.~Agbossou, S.~Kelouwani, and A.~Cardenas, ``Non-intrusive
  load monitoring through home energy management systems: A comprehensive
  review,'' {\em Renewable and Sustainable Energy Reviews}, vol.~79,
  pp.~1266--1274, 2017.

\bibitem{cuijpers2013smart}
C.~Cuijpers and B.-J. Koops, ``Smart metering and privacy in {Europe}: lessons
  from the {Dutch} case,'' in {\em European data protection: Coming of age},
  pp.~269--293, Springer, 2013.

\bibitem{dwork2014algorithmic}
C.~Dwork and A.~Roth, ``The algorithmic foundations of differential privacy,''
  {\em Foundations and Trends in Theoretical Computer Science}, vol.~9,
  no.~3--4, pp.~211--407, 2014.

\bibitem{dwork2006calibrating}
C.~Dwork, F.~McSherry, K.~Nissim, and A.~Smith, ``Calibrating noise to
  sensitivity in private data analysis,'' in {\em Theory of Cryptography
  Conference}, pp.~265--284, Springer, 2006.

\bibitem{Dwork2006DP20972822097284}
C.~Dwork, ``Differential privacy,'' in {\em Proceedings of the 33rd
  International Conference on Automata, Languages and Programming - Volume Part
  II}, ICALP'06, (Berlin, Heidelberg), pp.~1--12, Springer-Verlag, 2006.

\bibitem{joseph2018local}
M.~Joseph, A.~Roth, J.~Ullman, and B.~Waggoner, ``Local differential privacy
  for evolving data,'' in {\em Advances in Neural Information Processing
  Systems}, pp.~2375--2384, 2018.

\bibitem{dwork2010differential}
C.~Dwork, M.~Naor, T.~Pitassi, and G.~N. Rothblum, ``Differential privacy under
  continual observation,'' in {\em Proceedings of the 42nd ACM Symposium on
  Theory of Computing}, pp.~715--724, ACM, 2010.

\bibitem{chan2011private}
T.-H.~H. Chan, E.~Shi, and D.~Song, ``Private and continual release of
  statistics,'' {\em ACM Transactions on Information and System Security
  (TISSEC)}, vol.~14, no.~3, p.~26, 2011.

\bibitem{perrier2019private}
V.~Perrier, H.~J. Asghar, and D.~Kaafar, ``Private continual release of
  real-valued data streams,'' in {\em Proceedings of the Network and
  Distributed System Security Symposium (NDSS)}, 2019.

\bibitem{cummings2018differential}
R.~Cummings, S.~Krehbiel, K.~A. Lai, and U.~Tantipongpipat, ``Differential
  privacy for growing databases,'' in {\em Advances in Neural Information
  Processing Systems}, pp.~8864--8873, 2018.

\bibitem{vadhan2017complexity}
S.~Vadhan, ``The complexity of differential privacy,'' in {\em Tutorials on the
  Foundations of Cryptography}, pp.~347--450, Springer, 2017.

\bibitem{le2012differentially}
J.~Le~Ny and G.~J. Pappas, ``Differentially private kalman filtering,'' in {\em
  2012 50th Annual Allerton Conference on Communication, Control, and Computing
  (Allerton)}, pp.~1618--1625, IEEE, 2012.

\bibitem{le2013differentially}
J.~Le~Ny and G.~J. Pappas, ``Differentially private filtering,'' {\em IEEE
  Transactions on Automatic Control}, vol.~59, no.~2, pp.~341--354, 2013.

\bibitem{cortes2016differential}
J.~Cort{\'e}s, G.~E. Dullerud, S.~Han, J.~Le~Ny, S.~Mitra, and G.~J. Pappas,
  ``Differential privacy in control and network systems,'' in {\em 2016 IEEE
  55th Conference on Decision and Control (CDC)}, pp.~4252--4272, IEEE, 2016.

\bibitem{wang2017differential}
Y.~Wang, Z.~Huang, S.~Mitra, and G.~E. Dullerud, ``Differential privacy in
  linear distributed control systems: Entropy minimizing mechanisms and
  performance tradeoffs,'' {\em IEEE Transactions on Control of Network
  Systems}, vol.~4, no.~1, pp.~118--130, 2017.

\bibitem{han2018privacy}
S.~Han and G.~J. Pappas, ``Privacy in control and dynamical systems,'' {\em
  Annual Review of Control, Robotics, and Autonomous Systems}, vol.~1,
  pp.~309--332, 2018.

\bibitem{le2017differentially}
J.~Le~Ny and M.~Mohammady, ``Differentially private {MIMO} filtering for event
  streams,'' {\em IEEE Transactions on Automatic Control}, vol.~63, no.~1,
  pp.~145--157, 2017.

\bibitem{han2016differentially}
S.~Han, U.~Topcu, and G.~J. Pappas, ``Differentially private distributed
  constrained optimization,'' {\em IEEE Transactions on Automatic Control},
  vol.~62, no.~1, pp.~50--64, 2016.

\bibitem{nozari2015differentially}
E.~Nozari, P.~Tallapragada, and J.~Cort{\'e}s, ``Differentially private average
  consensus with optimal noise selection,'' {\em IFAC-PapersOnLine}, vol.~48,
  no.~22, pp.~203--208, 2015.

\bibitem{mo2016privacy}
Y.~Mo and R.~M. Murray, ``Privacy preserving average consensus,'' {\em IEEE
  Transactions on Automatic Control}, vol.~62, no.~2, pp.~753--765, 2016.

\bibitem{nozari2017differentially}
E.~Nozari, P.~Tallapragada, and J.~Cort{\'e}s, ``Differentially private average
  consensus: Obstructions, trade-offs, and optimal algorithm design,'' {\em
  Automatica}, vol.~81, pp.~221--231, 2017.

\bibitem{myerson1995discounting}
J.~Myerson and L.~Green, ``Discounting of delayed rewards: Models of individual
  choice,'' {\em Journal of the Experimental Analysis of Behavior}, vol.~64,
  no.~3, pp.~263--276, 1995.

\bibitem{berns2007intertemporal}
G.~S. Berns, D.~Laibson, and G.~Loewenstein, ``Intertemporal choice--toward an
  integrative framework,'' {\em Trends in Cognitive Sciences}, vol.~11, no.~11,
  pp.~482--488, 2007.

\bibitem{ramsey1928mathematical}
F.~P. Ramsey, ``A mathematical theory of saving,'' {\em The Economic Journal},
  vol.~38, no.~152, pp.~543--559, 1928.

\bibitem{1023072967612}
P.~A. Samuelson, ``A note on measurement of utility,'' {\em The Review of
  Economic Studies}, vol.~4, pp.~155--161, 02 1937.

\bibitem{ainslie1975specious}
G.~Ainslie, ``Specious reward: A behavioral theory of impulsiveness and impulse
  control,'' {\em Psychological Bulletin}, vol.~82, no.~4, p.~463, 1975.

\bibitem{ainslie1974impulse}
G.~W. Ainslie, ``Impulse control in pigeons,'' {\em Journal of the Experimental
  Analysis of Behavior}, vol.~21, no.~3, pp.~485--489, 1974.

\bibitem{Ainslie1981}
G.~Ainslie and R.~J. Herrnstein, ``Preference reversal and delayed
  reinforcement,'' {\em Animal Learning \& Behavior}, vol.~9, no.~4,
  pp.~476--482, 1981.

\bibitem{kirby1997bidding}
K.~N. Kirby, ``Bidding on the future: Evidence against normative discounting of
  delayed rewards,'' {\em Journal of Experimental Psychology: General},
  vol.~126, no.~1, p.~54, 1997.

\bibitem{vuchinich1998hyperbolic}
R.~E. Vuchinich and C.~A. Simpson, ``Hyperbolic temporal discounting in social
  drinkers and problem drinkers,'' {\em Experimental and Clinical
  Psychopharmacology}, vol.~6, no.~3, p.~292, 1998.

\bibitem{derlega1977privacy}
V.~J. Derlega and A.~L. Chaikin, ``Privacy and self-disclosure in social
  relationships,'' {\em Journal of Social Issues}, vol.~33, no.~3,
  pp.~102--115, 1977.

\bibitem{abshouse}
{Australian Bureau of Statistics (ABS)}, ``Australian social trends {December}
  2010: Moving house,'' 2010.
\newblock
  \url{https://www.ausstats.abs.gov.au/ausstats/subscriber.nsf/LookupAttach/4102.0Publication14.12.104/$File/41020_housingmobility2010.pdf}.

\bibitem{acquisti2004privacy}
A.~Acquisti and J.~Grossklags, ``Privacy attitudes and privacy behavior,'' in
  {\em Economics of Information Security}, pp.~165--178, Springer, 2004.

\bibitem{rosen2011right}
J.~Rosen, ``The right to be forgotten,'' {\em Stanford Law Review}, vol.~64,
  p.~88, 2011.

\bibitem{Chellappa2005}
R.~K. Chellappa and R.~G. Sin, ``Personalization versus privacy: An empirical
  examination of the online consumer's dilemma,'' {\em Information Technology
  and Management}, vol.~6, no.~2, pp.~181--202, 2005.

\bibitem{spiekermann2001privacy}
S.~Spiekermann, J.~Grossklags, and B.~Berendt, ``E-privacy in 2nd generation
  e-commerce: Privacy preferences versus actual behavior,'' in {\em Proceedings
  of the 3rd ACM conference on Electronic Commerce}, pp.~38--47, ACM, 2001.

\bibitem{bolot2013private}
J.~Bolot, N.~Fawaz, S.~Muthukrishnan, A.~Nikolov, and N.~Taft, ``Private
  decayed predicate sums on streams,'' in {\em Proceedings of the 16th
  International Conference on Database Theory}, pp.~284--295, ACM, 2013.

\bibitem{kellaris2014differentially}
G.~Kellaris, S.~Papadopoulos, X.~Xiao, and D.~Papadias, ``Differentially
  private event sequences over infinite streams,'' {\em Proceedings of the VLDB
  Endowment}, vol.~7, no.~12, pp.~1155--1166, 2014.

\bibitem{chen2017pegasus}
Y.~Chen, A.~Machanavajjhala, M.~Hay, and G.~Miklau, ``Pegasus: Data-adaptive
  differentially private stream processing,'' in {\em Proceedings of the 2017
  ACM SIGSAC Conference on Computer and Communications Security},
  pp.~1375--1388, ACM, 2017.

\bibitem{ausgrid}
Ausgrid, ``Solar home electricity data,'' 2014.
\newblock
  \url{https://www.ausgrid.com.au/Industry/Innovation-and-research/Data-to-share/Solar-home-electricity-data}.

\end{thebibliography}

\end{document}